\documentclass[11pt]{article}

\usepackage[T1]{fontenc}
\usepackage[utf8]{inputenc}
\usepackage{lmodern}
\usepackage{newunicodechar}
\newunicodechar{’}{\textquoteright}
\usepackage{xcolor}
\usepackage{tikz}
\usepackage{amsmath,amssymb,amsthm,mathtools,booktabs}
\usepackage[a4paper,margin=1in]{geometry}
\usepackage{enumitem}
\newtheorem*{theorem*}{Theorem}
\usepackage[numbers]{natbib}
\usepackage[colorlinks=true,linkcolor=blue,citecolor=blue,urlcolor=blue]{hyperref}
\usepackage{todonotes}
\usepackage{comment}
\newtheorem{theorem}{Theorem}

\newtheorem{lemma}[theorem]{Lemma}
\newtheorem{corollary}[theorem]{Corollary}

\usepackage{algorithm}
\usepackage{algpseudocode}
\usepackage{thmtools}
\usepackage{thm-restate}
\newcommand{\ceil}[1]{\lceil {#1} \rceil}

\title{Improved Space Bounds for Promise Graph Coloring in Adversarial Streams}
\author{Sergio Thoumi\\Lebanese American University}
\date{}
\usepackage{microtype}

\newcommand{\Prb}{\mathbb P}

\begin{document}

\maketitle

\begin{center}
\textit{In loving memory of Carl Feghali}
\end{center}

\begin{abstract}
 We study promise graph coloring in one pass insertion-only adversarial streams. We give a randomized algorithm that distinguishes 3-colorable graphs from graphs that are not 6-colorable using \(O(n^{3/2}\log(n/\varepsilon)\log n)\) bits of space. This provides a truly subquadratic space algorithm for a constant coloring gap, answering an open problem posed by Assadi, Sundaresan, and Yazdanyar [SODA 2026] for $q=3$.

Combining this algorithm with the vertex-sampling lemma of Assadi et al.\ yields, for any $144\lceil\log_2 n\rceil\le Q\le n$, a one-pass decision algorithm distinguishing $3$-colorable graphs from graphs that are not $Q$-colorable using $O((n/Q)^{3/2}\log^5 n)$ bits of space. This improves upon the $\widetilde{O}((n/Q)^2)$ bound achievable via dynamic streaming techniques. Setting $Q=\lceil n^{1/3}\rceil$ yields a semi-streaming decision algorithm, answering the decision version of another open problem posed by Assadi et al.

On the negative side, we prove that distinguishing \(\chi(G)\le q\) from \(\chi(G)\ge(q-1)^2\) requires \(\Omega(n^2/q^5)\) bits of space, improving upon the previously established \(q^2/4\) threshold using the same space complexity.
\end{abstract}

\section{Introduction}

Graph coloring is one of the fundamental problems in graph theory. A well-studied formulation is the $k$-coloring problem where the input is a graph and one should decide\footnote{In the non-decision variant, one would find a coloring, if it exists.} whether the vertices of the graph can be colored with at most $k$ colors such that no two adjacent vertices share the same color. In the case where such a coloring is possible, the graph is said to be $k$-colorable. The number of colors needed is its \emph{chromatic number} and is denoted as $\chi(G)$.

One generalization of $k$-coloring has been introduced \cite{PromiseColoring} as a Promise Constraint Satisfaction problem, where, given a graph, one has to decide whether it is $k$-colorable or its chromatic number is at least $K$ under the promise that one of these cases holds (here, $k<K$). When $K=k+1$, this is simply the decision variant of the $k$-coloring problem. 

More recently, the promise variant of graph coloring has been studied \cite{assadi2026coloring} in the graph streaming model \cite{feigenbaum2005graph}. In that model, we have a set of $n$ vertices labeled $\{1,2,\ldots,n\}$ and a stream of edge additions (and deletions). The goal is to make one or a few passes over the edge stream and find a solution while using the least amount of memory possible. Obviously, any graph problem can be trivially solved using $O(n^2)$ space by simply storing all its edges.

Assadi et al. \cite{assadi2026coloring} studied the problem in multiple streaming settings. In this paper, all our results hold in the one pass, insertion-only adversarial-order streaming model \cite{guha2009stream}, in which an adversary knows the algorithm and the entire stream of edge insertions and may arrange them in any order (note that our lower bound in Section \ref{sec:lowerbound} consequently also applies to dynamic streams).

In that setting, Assadi et al. proved that distinguishing between $q$-colorable graphs and graphs with chromatic number at least $q^2/4$ requires $\Omega(n^2/q^5)$ space. In the two-party one-way communication model, the corresponding problem admits an \(O(1)\)-bit protocol when the NO threshold exceeds \(q^2\). We prove that this lower bound holds for distinguishing between $q$-colorable graphs and graphs with chromatic number at least $(q-1)^2$. This result is presented in Section \ref{sec:lowerbound}.

In Section~\ref{sec:abovesix}, we give a randomized algorithm
that distinguishes $\chi(G)\le3$ from $\chi(G)>6$ using
\[
  O\!\left(n^{3/2}\log(n/\varepsilon)\log n\right)
\]
bits of space. The algorithm always provides a coloring for 3-colorable
graphs and rejects graphs with chromatic number greater
than $6$ with probability at least $1-\varepsilon$.
For constant error, this gives a truly subquadratic-space
algorithm for a constant coloring gap, answering the $q=3$
case of Open Problem~1 of Assadi et al.

In Section~\ref{sec:generalupperbound}, we combine the decision variant of this algorithm with the vertex-sampling lemma of Assadi et al.
For every fixed integer $c\ge1$ and every integer
$144\lceil\log_2 n\rceil\le Q\le n$, we distinguish
$\chi(G)\le3$ from $\chi(G)>Q$ using
\[
  O\!\left((n/Q)^{3/2}\log^5 n\right)
\]
bits of space, with perfect completeness and success
probability at least $1-n^{-c}$.
This improves the $\widetilde{O}((n/Q)^2)$ upper bound
obtained from their dynamic-stream algorithm.
For all sufficiently large $n$, setting
$Q=\lceil n^{1/3}\rceil$ gives an $O(n\log^5 n)$-space
decision algorithm. Thus, we answer affirmatively the
decision part of their Open Problem~2.

\section{Preliminaries}
\label{sec:preliminaries}

For an integer $r\ge1$, let $[r]=\{1,\ldots,r\}$, and let
$\binom{W}{2}$ denote the set of unordered pairs of distinct elements
of a set $W$. We write $\log$ for the base-two logarithm. The notation $\widetilde{O}(f)$ suppresses factors polynomial in $\log n$.

\subsection{Graphs and colorings}

All graphs are finite, undirected, unweighted, and simple. Thus, a graph
$G=(V,E)$ has $E\subseteq\binom{V}{2}$, with no self-loops or parallel
edges. We write $n=|V|$, $m=|E|$, and abbreviate the unordered pair
$\{u,v\}$ by $uv$. For $H\subseteq V$, the induced subgraph on $H$, denoted $G[H]$, has $H$ as its set of vertices, and any edges of $G$ whose both endpoints are in $H$.

For $v\in V$, its open neighborhood is
$N_G(v)=\{u\in V:uv\in E\}$.

An edge partition into subgraphs $H_1,\ldots,H_t$ means that their
edge sets are pairwise disjoint and have union $E(G)$; their vertex
sets may overlap.

A proper $q$-coloring is a map $\varphi:V\to[q]$ such that
$\varphi(u)\ne\varphi(v)$ for every $uv\in E$.
The chromatic number $\chi(G)$ is the minimum number of colors in
a proper coloring.
A clique is a set of pairwise adjacent vertices, whereas an independent
set contains no edges. We use standard graph-theoretic terminology
as in~\cite{diestel2025graph}.

\subsection{Paths, cycles, and parity}

A path $P=(v_0,\ldots,v_r)$ consists of distinct vertices with $v_{j-1}v_j\in E$ for every $j\in[r]$. Its length is the number $r$ of edges.  A cycle of length $r\ge3$ consists of a path $(v_0,\ldots,v_{r-1})$ of length $r-1$ along with the edge $v_{r-1}v_0$.

A path or cycle is \emph{odd} or \emph{even} according to its length.
Paths and cycles need not be induced.

A graph is bipartite if its vertex set can be partitioned into two
independent sets. Equivalently, it admits a proper coloring with at
most two colors, or contains no odd cycle \cite{diestel2025graph}.

A forest is an acyclic graph. Each pair of vertices in the same
component of a forest has a unique connecting path. A forest on $n$ vertices has at most $n-1$ edges.

\subsection{Promise graph coloring}

For integers $1\le q<Q$, the promise decision problem asks an
algorithm to distinguish
\[
 \chi(G)\le q
 \qquad\text{from}\qquad
 \chi(G)\ge Q.
\]
The first case is a YES-instance and the second is a NO-instance.
The input is promised to satisfy one of these conditions; either
answer is allowed when $q<\chi(G)<Q$. Throughout this paper, we sometimes refer to the problem as $\chi(G)\le q \text{ vs } \chi(G)\ge Q$ coloring.

\subsection{Communication complexity and streaming lower bounds}

We use randomized two-party one-way communication protocols with
public randomness. Alice and Bob receive separate inputs and have
access to shared random bits independent of those inputs. Alice
sends a single message to Bob, who then outputs the answer.
Local computation is unrestricted. The communication cost is the
maximum message length in bits over all inputs and random choices;
the public random bits themselves do not count toward this cost. See~\cite{roughgarden2016communication} for background.

\paragraph{The INDEX problem.}
In $\mathrm{INDEX}_t$, Alice receives $x\in\{0,1\}^t$ and Bob
receives an index $i\in[t]$. Bob must output $x_i$.
Every randomized one-way protocol that succeeds with probability
at least $2/3$ on every input uses $\Omega(t)$ bits, even with public
randomness~\cite[Section~2.4]{roughgarden2016communication}.

\paragraph{The graph coloring communication problem.}
For integers $n\ge1$ and $1\le q<Q$, in
$\mathrm{COLOR}_{n,q,Q}$, Alice and Bob receive disjoint edge sets
$E_A,E_B\subseteq\binom{[n]}{2}$. The parameters $n,q,Q$ are
known to both players. For $G=([n],E_A\cup E_B)$, Bob must output
a YES-instance when $\chi(G)\le q$ and a NO-instance when
$\chi(G)\ge Q$, under the promise that one of these cases holds.

If a randomized algorithm accepts every YES-instance with probability $1$, it is said to have \emph{perfect completeness}.

\paragraph{Streaming and communication complexity.}
An $s$-bit one-pass streaming algorithm for this promise problem
induces a one-way protocol communicating at most $s$ bits with the
same error probability (See \cite[Proposition~4.2, arXiv version]{{assadi2026coloring}} for a proof).

Consequently, a one-way communication lower bound for
$\mathrm{COLOR}_{n,q,Q}$ yields the same streaming space lower bound.

\section{$\chi(G)\leq q$ vs $\chi(G)\geq (q-1)^2$ Coloring }\label{sec:lowerbound}

We first introduce the following Lemma \cite[Lemma~4.4 (arXiv version)]{assadi2026coloring}:

\begin{lemma}\label{lemma:subgraphscontruction}
For infinitely many integers $n \geqslant 1$ and any integer $1 \leqslant k \leqslant \sqrt{n}$, there exists a $k$-colorable graph $G=(V,E)$, such that its edges can be partitioned into
\[
t=\Omega\!\left(\frac{n^2}{k^5}\right)
\]
subgraphs,
\[
H_1 \cup \dots \cup H_t.
\]
The subgraphs have the following property: for every $i \in [t]$, the induced subgraph of $G$ on $V(H_i)$ is a vertex-disjoint union of $k$ separate $k$-cliques.
\end{lemma}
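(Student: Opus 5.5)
The plan is an explicit algebraic construction: a Ruzsa--Szemer\'edi-type family of induced clique factors, built from lines over $\mathbb{Z}_p$ with a Behrend-type set of slopes.

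\textbf{Setup.} Take $n$ of the form $k\cdot m$ with $m$ a prime $p$; this gives infinitely many $n$. Let the vertex set be $V=[k]\times\mathbb{Z}_p$, arranged as $k$ layers $L_1,\dots,L_k$, each a copy of $\mathbb{Z}_p$. Coloring every vertex by its layer index is proper once all edges join distinct layers, so $G$ is $k$-colorable.

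\textbf{Cliques.} For a ``line'' indexed by a base point $a\in\mathbb{Z}_p$ and a slope $d$, put the $k$-clique $K_{a,d}=\{(j,a+jd):j\in[k]\}$. Each subgraph $H_i$ is indexed by a slope $d$ together with a shift class. We take $k$ lines of slope $d$ whose base points are spaced far apart, so that the cliques are vertex-disjoint. Then $H_i$ is the union of these $k$ cliques, and $E(G)$ is the union of all the $H_i$.

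\textbf{Edge-disjointness.} Two vertices $(j,x)$ and $(j',x')$ with $j\ne j'$ determine $d=(x'-x)/(j'-j)$ and hence $a$, as long as $p>k$. So distinct lines share at most one vertex, and the $H_i$ are edge-disjoint.

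\textbf{Inducedness (the main obstacle).} We need $G[V(H_i)]$ to contain no edge other than those of $H_i$. An edge between $(j,a+jd)$ and $(j',a'+j'd)$ with $a\ne a'$, lying on some line of slope $d'$, forces a linear relation
\[
a'-a=(j'-j)(d'-d)+j(d-d')+\dots,
\]
roughly $(a'-a)=j'(d'-d)+j(d-d')$-type identities. I would restrict slopes to a set $D\subseteq[p/k]$ free of nontrivial solutions to $x+y=2z$-style equations with coefficients up to $k$. Concretely, use a set avoiding $\lambda$-term arithmetic configurations, or simply a Sidon-type set, and pick the base points of the $k$ lines in a block from a small set, so that any cross edge would produce a forbidden solution.

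\textbf{Counting.} Avoiding the small-coefficient equations costs only polynomial losses in $k$, for instance by taking $D$ to be multiples scaled by $k^2$ inside a Sidon set, giving $|D|\approx p/k^{O(1)}$. There are about $p/k$ shift classes per slope. So the total is
\[
t\approx |D|\cdot\frac{p}{k}\approx \frac{p^2}{k^{3}}=\frac{n^2}{k^5},
\]
using $n=kp$. This is the claimed bound, provided $D$ has size $\Omega(p/k^2)$ with no polylog or Behrend loss. That size condition is exactly the delicate point: with $k\le\sqrt n$, I would aim for a greedy or modular choice (slopes congruent to distinct residues modulo a prime of size about $k^2$) that kills all coefficient-bounded coincidences, rather than a Behrend set.

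I am unsure the exponents balance, so I would verify the count against the target $n^2/k^5$ first before fixing the slope set.
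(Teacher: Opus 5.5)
There is a genuine gap, and it is the step you yourself call the main obstacle. You never exhibit a slope set $D$ and base-point blocks for which inducedness holds and the count reaches $n^2/k^5$. The relevant condition is also not of $x+y=2z$ type. Suppose $(j,a+jd)$ and $(j',a'+j'd)$, with $j\ne j'$ and $a\ne a'$ in the same block, both lie on a used line of slope $d'$. Subtracting the two incidence equations gives exactly (this is what your displayed relation should say)
\[
a'-a\equiv (j'-j)(d'-d)\pmod p .
\]
So each block's difference set must avoid $\{m(d'-d): 1\le|m|\le k-1,\ d'\in D\}$.

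The sets you propose cannot give the density $|D|=\Omega(p/k^2)$ that your count needs. A Sidon set in $\mathbb{Z}_p$, scaled or not, has size $O(\sqrt p)$, not $p/k^{O(1)}$. By Roth's theorem, a set with no nontrivial solution of $x+y=2z$ has size $o(p)$, which already fails for constant $k$. The blocks are never specified, and as you note the exponents were not checked. So the inducedness claim and the bound $t=\Omega(n^2/k^5)$ remain unproved.

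The missing idea is to separate the $k$ cliques of each $H_i$ so that no line of $G$ can pass between two of them; this needs no additive combinatorics. The construction the paper recalls from Assadi et al.\ does this with a slot index:
\begin{itemize}
\item each layer is split into $n/k^2$ groups of $k$ slots;
\item the line $P_{b,p,z}$ uses slot $z$ of group $b+(r-1)p$ in layer $r$, with integers $b\le n/(2k^2)$ and $p\le n/(2k^3)$ (here $p$ is the step, not your prime), so there is no wrap-around;
\item $H_{b,p}$ is the union of the $k$ parallel lines, $z\in[k]$.
\end{itemize}
Every edge joins equal slots, and two groups in different layers determine $(b,p)$. Hence edge-disjointness and inducedness are immediate, with $t=n^2/(4k^5)$.

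Your modular framework can be rescued in the same spirit. Take $\Delta=\lfloor p/k\rfloor$, blocks $\{c,c+\Delta,\dots,c+(k-1)\Delta\}$ for $0\le c<\Delta$, and $D=\{0,\dots,s-1\}$ with $(k-1)(s-1)<\Delta$. Then $a'-a=\ell\Delta$ with $1\le|\ell|\le k-1$. The integer $\ell\Delta-(j'-j)(d'-d)$ is nonzero, since $|(j'-j)(d'-d)|<\Delta$, and has absolute value less than $k\Delta\le p$. So the congruence above never holds, and you get $s\Delta=\Omega(p^2/k^3)=\Omega(n^2/k^5)$ subgraphs.

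Finally, setting $n=kp$ ties $n$ to $k$, whereas the lemma fixes $n$ first and allows every $k\le\sqrt n$. Fix this by taking the largest prime $p\le n/k$ and padding with isolated vertices.
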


We recall the nomenclature that are also used in the cited construction that are used below.
The base graph has $k$ layers $L^{(1)},\ldots,L^{(k)}$.
Layer $L^{(r)}$ consists of groups
$L^{(r)}_j=\{v^{(r,j)}_1,\ldots,v^{(r,j)}_k\}$,
where $j\in[n/k^2]$. For
$b\in[n/(2k^2)]$ and $p\in[n/(2k^3)]$, the $z$-th line is
\[
 P_{b,p,z}=
 \bigl(v^{(1,b)}_z,v^{(2,b+p)}_z,\ldots,
       v^{(k,b+(k-1)p)}_z\bigr),
 \qquad z\in[k].
\]
The subgraph $H_{b,p}$ consists of the $k$ cliques on these lines.
Each clique is ordered by layer.
The index $z$ is a global vertex label, so every base-graph edge
joins vertices with the same $z$ and different layer indices.
Consequently, $\varphi(v^{(r,j)}_z)=r$ is a proper $k$-coloring.
This fixes both the common clique numbering and the common
vertex ordering used in the reduction.

Given a protocol for $\mathrm{COLOR}_{n,k+1,k^2}$, we create a protocol for $\mathrm{INDEX}_t$ as follows.

Let $H_1 \cup \dots \cup H_t$ be the subgraphs of the graph obtained from Lemma~\ref{lemma:subgraphscontruction} with $t=\Omega(n^2/k^5)$. Since the graph is $k$-colorable, fix a proper $k$-coloring $\varphi \colon V \to [k]$ such that each layer is assigned exactly one distinct color. The protocol then goes as follows:

\begin{algorithm}[H]
\caption{The protocol $\pi_{\mathrm{INDEX}}$ given the protocol
$\pi_{\mathrm{COLOR}}$}
\label{alg:index-to-color}

\begin{enumerate}[label=\arabic*.]
\item 

Given $x\in\{0,1\}^t$, for every $j\in[t]$, Alice adds the edges of $H_j$ to $E_A$ if and only if $x_j=1$. She then defines $G_A=(V,E_A)$.

\item  Let $D_i$ be the set of all possible edges between vertices  of $H_i$ that have been assigned a different color under $\varphi$. That is, $D_i \coloneq \{uv \colon u,v\in V(H_i),\ \varphi(u)\neq \varphi(v)\}$.

Additionally, let $L_i$ be the set of all possible edges between vertices of $H_i$ that have been assigned the same color under $\varphi$. That is, $L_i \coloneq \{uv \colon u,v\in V(H_i),\ \varphi(u)=\varphi(v)\}$.

Given an index $i\in[t]$, Bob defines his graph as $G_B=(V,E_B)$, where $E_B=L_i\cup (D_i\setminus E(H_i))$.

\item Alice runs $\pi_{\mathrm{COLOR}}$ on $G_A$ and sends the message $M(G_A)$ to Bob, and Bob outputs $x_i=1$ if $\pi_{\mathrm{COLOR}}$ declares $\chi(G)\geq k^2$ and otherwise outputs $x_i=0$.

\end{enumerate}
\end{algorithm}

\begin{theorem}\label{theorem:main}
Let $G$ be the graph produced by the reduction from an
instance $(x,i)$ of $\mathrm{INDEX}_t$ to
$\mathrm{COLOR}_{n,k+1,k^2}$. If $x_i=0$, then
$\chi(G)\leqslant k+1$. If $x_i=1$, then $\chi(G)\geqslant k^2$.
\end{theorem}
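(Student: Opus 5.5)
Write $G=(V,E_A\cup E_B)$, $W=V(H_i)$, and let $A=\{j\ne i: x_j=1\}$. Then $E(G)=\bigcup_{j\in A}E(H_j)\cup E_B$, plus $E(H_i)$ when $x_i=1$. I will treat the two cases separately. Both rest on one fact about $G[W]$: by Lemma~\ref{lemma:subgraphscontruction}, $W$ is the vertex-disjoint union of $k$ lines $P_{b,p,1},\dots,P_{b,p,k}$. Each line meets every layer exactly once, so $W$ contains exactly $k$ vertices of each color class of $\varphi$, and $|W|=k^2$.

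\emph{Case $x_i=1$.} Here $E(H_i)\subseteq E(G)$ and $D_i\setminus E(H_i)\subseteq E_B$, so all of $D_i$ lies in $E(G)$. Since $L_i\subseteq E_B$ as well, every pair of distinct vertices of $W$ is an edge: a pair with different $\varphi$-colors is in $D_i$, and a pair with equal $\varphi$-colors is in $L_i$. Hence $W$ is a clique of size $k^2$ and $\chi(G)\ge k^2$.

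\emph{Case $x_i=0$.} The plan is to give an explicit $(k+1)$-coloring. Color every vertex outside $W$ by $\varphi$. On $W$, Bob's edges make each color class of $W$ (its $k$ vertices from one layer) a clique, while the edges of $D_i\setminus E(H_i)$ join vertices of different layers lying on different lines. So within $W$, two vertices are non-adjacent exactly when they are in different layers and on the same line. This suggests coloring $v\in W$ on line $z$ by $z\in[k]$: each color class in $W$ is then one line, which is independent in $G[W]$ because $E(H_i)$ is absent. The coloring is also proper on $G[W]$, since two vertices on different lines get different colors. The main obstacle is edges between $W$ and $V\setminus W$, which come from the $H_j$ with $j\in A$. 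Those edges join a vertex $u\in W$ on line $z$ to a vertex $w\notin W$ in a different layer with the same label $z$, and $w$ has color $\varphi(w)$, which may equal $z$.

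I would resolve this by spending the extra color. I would recolor $W$ using colors $\{1,\dots,k+1\}$ so that each vertex $u\in W$ avoids $\varphi$ of its outside neighbors. The available structure is that outside neighbors of $u$ lie in layers different from $u$'s layer $r$, so $\varphi(u)=r$ never conflicts with them. The difficulty is that coloring $W$ by $\varphi$ fails inside $W$, because each layer is a clique there. One possibility is to exploit the specific line structure. For instance, give color $k+1$ to a suitable independent set of $W$ (one line), and color the rest of $W$ using a cyclic shift of $\varphi$ such as $\varphi(u)+z \bmod k$. One then checks from the arithmetic progression structure that no outside edge of $H_j$, with $j\in A$, meets $W$ in a conflicting way. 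I would use the fact that two distinct arithmetic progressions $(b,p)$ and $(b',p')$ share at most one group, so each $H_j$ touches each line of $W$ in at most one vertex. If this recoloring fails, the fallback is to modify the global coloring outside $W$ as well. This part is where I expect the real work of the proof to lie.
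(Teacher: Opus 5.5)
Your case $x_i=1$ is correct and matches the paper. The case $x_i=0$, however, is left unproved, and the route you propose cannot be completed.

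\textbf{Why keeping $\varphi$ outside $W$ fails.} As long as every vertex outside $W$ keeps its color $\varphi$, no recoloring of $W$ alone works for $k\ge3$. Take $u\in W$ on line $z$ in layer $r$, and some $H_j$ with $j\ne i$ and $x_j=1$ that passes through $u$. Such an $H_j$ exists for every $u\in W$ when, for instance, $x_j=1$ for all $j\ne i$ and $H_i=H_{b,p}$ with $b\ge k$ and $p<n/(2k^3)$; take $H_{b-(r-1),p+1}$.
\begin{itemize}
\item The $z$-th line of $H_j$ has label $z$, so it can meet $W$ only on line $z$ of $H_i$.
\item Two distinct progressions that agree in two layers coincide, so this line meets $W$ only in $u$.
\item Hence $u$ has outside neighbors in every layer $r'\ne r$. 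These carry every color of $[k]\setminus\{r\}$, so $u$ is forced into $\{r,k+1\}$.
\item The $k$ vertices of layer $r$ in $W$ are pairwise adjacent through $L_i$, so they need $k$ distinct colors. This is impossible for $k\ge3$.
\end{itemize}
In particular, your scheme (one line gets $k+1$, the rest get $\varphi(u)+z \bmod k$) fails on every line with nonzero shift. The one-point-intersection property you planned to invoke is what creates these conflicts, not what removes them.

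\textbf{The missing idea.} Spend the extra color \emph{outside} $W$, by making the outside coloring depend on the label $z$ and not only on the layer. Let $C_z=[k+1]\setminus\{z\}$, and give $v^{(r,j)}_z$ the $r$-th smallest element of $C_z$.
\begin{itemize}
\item This color depends only on $(r,z)$, so it is well defined however many $H_j$ contain the vertex.
\item It is proper on every base-graph edge, since such an edge joins two vertices with the same label in different layers.
\item It never gives color $z$ to a label-$z$ vertex.
\end{itemize}
Now color line $z$ of $W$ with $z$, as you suggested. Edges inside $W$ join different lines, as you showed. Every edge between $W$ and $V\setminus W$ comes from some $H_j$ with $j\ne i$ and joins two vertices with the same label $z$. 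One endpoint then has color $z$ and the other a color in $C_z$, so the coloring is proper.

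This is the paper's proof. Your ``fallback'' of modifying the global coloring needed exactly this construction; without it, the $(k+1)$-colorability claim remains unproved.
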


\begin{proof}

First, suppose $x_i=1$. Then Alice contributes all edges of $H_i$.
Together with Bob's edges $L_i\cup (D_i\setminus E(H_i))$, this makes
$G[V(H_i)]$ a clique of size $k^2$. Hence $\chi(G)\ge k^2$. 

We now show that the graph is $(k+1)$-colorable when $x_i=0$.

For each $z\in [k]$, define
\[
C_z \coloneqq \{1,2,\dots,k+1\}\setminus \{z\}.
\]

To color $G$, we first take a subgraph $H_j \neq H_i$ and color its $z$-th clique with
the colors in $C_z$. Since every $k$-clique is colored with $k$ distinct colors, this gives a proper coloring of the subgraph.

We repeat this process for every subgraph other than $H_i$, preserving both a common ordering of the cliques and a common internal ordering of vertices within each clique. That is, for every $r \in [k]$, the $r$-th vertex of the $z$-th clique in $H_j$ receives the same color as the $r$-th vertex of the $z$-th clique in $H_{j'}$.

Since the edges of $G$ can be partitioned into $t$ subgraphs, every edge of $G$ belongs to some subgraph\footnote{This excludes the edges $E_B$ which are outside of the construction of Lemma \ref{lemma:subgraphscontruction}.}. Therefore, if a single fixed vertex-coloring properly colors each subgraph
$H_j \neq H_i$, then every edge of $G$ that does not have an endpoint in $H_i$ has endpoints with distinct colors.

It remains to show that we can color the vertices of $H_i$ while only using colors from $\{1,2,\dots,k+1\}$ and maintaining the proper coloring of the graph.

Observe that by definition, each set $C_z$ misses exactly one color from $\{1,2,\dots, k+1\}$, and that is the color $z$.  For each $z \in [k]$, define $M_z \coloneqq z$.

To color $H_i$, we take the vertices of its $z$-th line (i.e. what was the $z$-th clique in the original graph) and color them with the color $M_z$. Note that vertices that also belong to other subgraphs would already be colored. In that case, we simply recolor them while ignoring the previously assigned color.

When $x_i$ = 0, the edges of $E(H_i)$ are absent from Alice's graph. Since $L_i$ and $D_i$ partition all vertex pairs in $V(H_i)$ and every clique edge of $E(H_i)$ joins vertices from different layers (so $E(H_i)\subseteq D_i$), Bob's edge set $E_B = L_i \cup  (D_i \setminus E(H_i))$ contains all pairs inside $V(H_i)$ except the pairs that lie on the same original line (clique) of $H_i$.

Therefore, every edge in $E_B$ inside $V(H_i)$ joins vertices from two different lines. Since the coloring gives line $z$ the color $z$, the coloring satisfies the constraints imposed by $E_B$.

It remains to show that the coloring satisfies the constraints imposed by $E_A$. Let $S_i\subseteq E_A$ be the set of all edges in $G$ with one endpoint in $V(H_i)$ and the other endpoint in $V(G_A) \setminus V(H_i)$. That is, $S_i \coloneq \{uv \in E_A \colon u \in V(H_i) \text{ and } v \in V(G_A) \setminus V(H_i)\}$. Due to the inducedness property of the original construction, $E_A\setminus S_i$ has no edges with an endpoint in $H_i$, so $S_i$ is exactly the subset of $E_A$ that imposes constraints on the coloring of $H_i$.

Now observe that $S_i$ has only edges whose endpoints are in the same line/clique number. This means that if we have an edge $xy\in S_i$ then both its vertices $x$ and $y$ must be in the $z$-th line/clique for some $z\in [k]$.

Since $x \in V(H_i)$ and $y \in V(H_j)$ for some $j \neq i$, then we know $x$ has color $M_z$ and $y$ has a color that is an element of $C_z$. By definition, $M_z\notin C_z$. Thus, the coloring also satisfies the constraints of $S_i$ and $G$ can be properly colored using $k+1$ colors.

\end{proof}

By letting $q=k+1$, we obtain the following result:

\begin{corollary}
For infinitely many integers $n \geq 1$ and any integer $3 \leq  q\leq \sqrt{n} + 1$, any randomized one-pass insertion-only streaming algorithm that distinguishes graphs with chromatic number at most $q$ from graphs with chromatic number at least $(q-1)^2$, with success probability at least $2/3$, requires $\Omega(n^2/q^5)$ bits of space in adversarial-order streams.
\end{corollary}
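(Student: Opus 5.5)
The corollary follows by chaining the reduction of Algorithm~\ref{alg:index-to-color} with the $\Omega(t)$ lower bound for $\mathrm{INDEX}_t$ and the streaming-to-communication simulation from Section~\ref{sec:preliminaries}. Set $k=q-1$. The hypothesis $3\le q\le\sqrt n+1$ gives $2\le k\le\sqrt n$, so Lemma~\ref{lemma:subgraphscontruction} applies for the infinitely many admissible $n$. It yields a $k$-colorable graph whose edges split into $t=\Omega(n^2/k^5)$ subgraphs $H_1,\dots,H_t$ with the stated inducedness property.

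Suppose a one-pass insertion-only streaming algorithm uses $s$ bits and distinguishes $\chi(G)\le q$ from $\chi(G)\ge (q-1)^2$ with success probability at least $2/3$. By the simulation result quoted in Section~\ref{sec:preliminaries}, it induces a one-way public-coin protocol $\pi_{\mathrm{COLOR}}$ for $\mathrm{COLOR}_{n,k+1,k^2}$ with $s$ bits of communication and the same error.

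Plug $\pi_{\mathrm{COLOR}}$ into Algorithm~\ref{alg:index-to-color}. Alice's edge set $E_A$ depends only on $x$, and Bob's edge set $E_B=L_i\cup(D_i\setminus E(H_i))$ depends only on $i$. We must check that these are disjoint, valid inputs, as $\mathrm{COLOR}$ requires.
\begin{itemize}
\item $E_A$ consists of base-graph edges. Each of these joins two vertices with the same $z$ and different layers.
\item Any such base-graph edge with both endpoints in $V(H_i)$ is, by inducedness, an edge of $H_i$. It is therefore excluded from $D_i\setminus E(H_i)$.
\item It is also not in $L_i$, since $\varphi$ is the layer coloring and the endpoints lie in different layers.
\end{itemize}
Hence $E_A\cap E_B=\emptyset$.

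By Theorem~\ref{theorem:main}, $x_i=0$ gives $\chi(G)\le k+1=q$ and $x_i=1$ gives $\chi(G)\ge k^2=(q-1)^2$. So every instance satisfies the promise, and Bob's output equals $x_i$ whenever $\pi_{\mathrm{COLOR}}$ is correct, which happens with probability at least $2/3$. This is a one-way $\mathrm{INDEX}_t$ protocol with $s$ bits of communication, so $s=\Omega(t)=\Omega(n^2/k^5)$.

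Finally, $k=q-1\ge q/2$ for $q\ge 2$, so $k^5\le q^5$ and the bound is $\Omega(n^2/q^5)$. The only point needing real care is the disjointness of $E_A$ and $E_B$ together with the promise holding on every instance. Both are handled by the inducedness property and by Theorem~\ref{theorem:main}; the rest is bookkeeping.
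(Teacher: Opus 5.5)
Your proposal is correct and follows the paper's route: set $k=q-1$, feed the streaming-induced protocol into Algorithm~\ref{alg:index-to-color}, and apply Theorem~\ref{theorem:main} with the $\Omega(t)$ lower bound for $\mathrm{INDEX}_t$. Your explicit check that $E_A\cap E_B=\emptyset$ is a detail the paper leaves implicit, and one small point: $k^5\le q^5$ already follows from $k<q$, so the remark $k\ge q/2$ is not needed.
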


\section{$ \chi(G)\le 3\text{ vs }\chi(G)>6$ Coloring} \label{sec:abovesix}

\begin{theorem}\label{thm:core}
For every $n\ge2$ and $0<\varepsilon<1$, there is a randomized one pass adversarial-order insertion-only algorithm that distinguishes
\[
 \chi(G)\le3 \qquad\text{from}\qquad \chi(G)>6
\]
using $O(n^{3/2}\log(n/\varepsilon)\log n)$ bits in the worst case. It succeeds with probability at least $1-\varepsilon$.
\end{theorem}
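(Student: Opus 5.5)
The plan is to build a $6$-coloring as a product, or sum, of two small colorings, keeping only $O(n^{3/2})$ edges plus $O(\sqrt n\log(n/\varepsilon))$ parity forests. The algorithm never stores more than $\lceil\sqrt n\,\rceil$ edges per vertex. I commit to a light/heavy split tied to the stream. A vertex is \emph{light} while its current degree is at most $\lceil\sqrt n\,\rceil$. All edges seen while both endpoints are still light are stored explicitly, which costs $O(n^{3/2}\log n)$ bits. A vertex becomes \emph{heavy} permanently the moment its degree exceeds $\sqrt n$.

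Independently, at the start I sample a set $S$ in which each vertex is included with probability $p=\min\{1,\,c\log(n/\varepsilon)/\sqrt n\}$. For every $s\in S$ I store all edges incident to $s$. By a Chernoff bound, this costs $O(n^{3/2}\log(n/\varepsilon)\log n)$ bits with probability $1-\varepsilon/2$, and I abort if the budget is exceeded, which makes the bound hold in the worst case.

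The key property of $S$ is this. For every heavy vertex $v$, the first $\sqrt n$ neighbors of $v$ are fixed by the adversarial stream independently of our coins. A union bound then gives that, with probability $1-\varepsilon/2$, every heavy vertex has a neighbor in $S$ among its first $\sqrt n$ neighbors. This gives a witness $s(v)\in S$.

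For the final coloring I use three fresh colors $\{1,2,3\}$ on the light part and three colors $\{4,5,6\}$ on the heavy part. For the light part, the stored light--light edges form a subgraph of $G$. If $G$ is $3$-colorable, an exhaustive offline search (local computation is free) finds a proper $3$-coloring of it; otherwise we reject, which is safe by perfect completeness. For the heavy part, I group the heavy vertices by their witness: $N(s)$ is bipartite whenever $G$ is $3$-colorable. The goal is to combine a $2$-coloring of each $N(s)$ with the index of $s$ in a proper coloring of $G[S]$. $G[S]$ is fully stored and has $\tilde O(\sqrt n)$ vertices, so a brute-force $3$-coloring of it is available.

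To test bipartiteness of $N(s)$ in one pass despite adversarial order, I maintain for each $s\in S$ a parity union--find forest over the vertices \emph{currently} known to lie in $N(s)$, in $O(n\log n)$ bits per $s$. Edges $uv$ that arrive before $u$ or $v$ joins $N(s)$ are handled through the degree cap: while a vertex is light, its edges are stored, so they can be replayed into the forest when it joins $N(s)$. Once a vertex is heavy, it has already been assigned its witness, so only edges arriving later matter for that class.

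The step I expect to be the main obstacle is the correctness of the rejection side. I have to show that if all stored certificates are consistent, namely the light part is $3$-colorable, $G[S]$ is $3$-colorable, and each witness class is bipartite with respect to the edges recorded, then $G$ really has a proper $6$-coloring. There are three kinds of edges to handle. Light--light edges are handled by the first palette. Heavy--light edges are handled by palette disjointness. Heavy--heavy edges need the combination of the witness's color and the parity inside $N(s)$ to separate endpoints. This requires that the color of $s(u)$ and that of $s(v)$ differ, or that $u$ and $v$ lie in a common class with opposite parity. That in turn needs an assignment rule for $s(\cdot)$ under which two heavy neighbors with equal witness color end up in a common forest. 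I would first try choosing $s(v)$ canonically as the least-indexed sampled neighbor. I would then argue via the stored edges of $S$ that conflicting pairs are always captured by some parity forest, and adjust the palette split, for instance to $2\times 3$ product colors, if the $3+3$ split fails.
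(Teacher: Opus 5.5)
There is a genuine gap, and it is exactly the step you flag as the main obstacle: the proposal contains no argument that $G$ is $6$-colorable when all your checks pass, and the constructions you sketch cannot supply one.

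\textbf{The disjoint palette overshoots.} The $3+3$ split cannot work, because nothing you store yields a proper $3$-coloring of the heavy vertices. The product of a witness color (from a $3$-coloring of $G[S]$) and a parity (from a $2$-coloring of $N(s)$) already needs $6$ colors on the heavy part alone. Keeping a disjoint palette for the light part then gives $9$ colors.

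\textbf{Cross-class heavy edges are uncontrolled.} Take a heavy--heavy edge $uv$ whose witnesses $s\ne s'$ get the same color in your coloring of $G[S]$. This edge lies in neither parity forest, and the $2$-colorings of $N(s)$ and $N(s')$ are chosen independently, so $u$ and $v$ can receive identical product colors. Your per-$s$ forests never see cross-class edges, and nothing links the coloring of $G[S]$ to the class parities.

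\textbf{Smaller problems.} The least-indexed-sampled-neighbor rule is only known at the end of the stream, so edges cannot be routed to witness classes online. Also, storing only edges with both endpoints light breaks your replay step: an edge from an already-heavy $u\in N(s)$ to a light $v$ that later joins $N(s)$ is lost.

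\textbf{What the paper does instead.}
\begin{enumerate}
\item Assign each vertex permanently to the first anchor it becomes adjacent to, so the classes $A_a$ are disjoint.
\item Buffer every edge arriving with an unassigned endpoint. With probability $1-\varepsilon$ this is only $O(\sqrt n\log(n/\varepsilon))$ edges per vertex, and no replay is ever needed.
\item Send every other edge to a forest $F_{a(u)a(v)}$ indexed by the unordered pair of anchors, including $a=b$.
\begin{itemize}
\item For $a\ne b$ the forest is automatically bipartite with sides $A_a,A_b$. So a discarded edge always has an odd replacement path inside $A_a\cup A_b$.
\item For $a=b$, an even path plus the new edge is an odd cycle in $N(a)$, and the algorithm rejects.
\end{itemize}
Disjointness of the classes keeps $\sum|F_{ab}|\le sn$.
\item Take a \emph{single} $3$-coloring $D$ of the whole retained graph $H$. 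This graph includes the buffered assignment edges $va(v)$.
\item Use the palette $\{(i,j):i\ne j\}$, with $C(v)=(D(a(v)),D(v))$ for assigned $v$ and $C(v)=(f(D(v)),D(v))$, $f(j)\ne j$, for unassigned $v$.
\end{enumerate}

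Why this palette works: every retained edge, including every edge touching an unassigned vertex, is separated by the second coordinate, so no disjoint palette is needed. For a discarded edge with $D(a)=D(b)=i$, every vertex on its odd path is adjacent in $H$ to $a$ or $b$. Hence $D$ colors the path with the two colors other than $i$, and odd length forces $D(u)\ne D(v)$. This common coloring $D$ of anchors and non-anchors is exactly the link your separately computed colorings lack.
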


In the above, a YES-instance implies that $\chi(G)\le3$ while a NO-instance implies that $\chi(G)>6$.

\vspace{5pt}

The main idea to prove Theorem \ref{thm:core} is to sample a subgraph $H$ such that every edge in $G$ is either in $H$ or has an odd-length replacement path in it. This path allows us to prove that any 3-coloring of $H$ results in a $6$-coloring of $G$, by showing that edges in $E(G\setminus H)$ can be properly colored (see Lemma \ref{lem:6color}).

\vspace{5pt}

\noindent\textbf{Construction.} We first describe a construction for an arbitrary fixed anchor set $S\subseteq[n]$, $|S|=\ceil{\sqrt{n}}=s$. Each vertex $v$ has an initially undefined anchor assignment $a(v)$. Once defined, the assignment is permanent. Let
\[
 A_a=\{v:a(v)=a\}\quad(a\in S),\qquad
 U=\{v:a(v)\text{ remains undefined}\}.
\]
The groups are obviously disjoint. 

Maintain a buffer $B$ and forests $F_{ab}$ indexed by unordered pairs of anchors, allowing $a=b$. For each arriving edge $\{u,v\}$, perform the below steps in the stated order:
\begin{enumerate}
    \item Check the assignment of each endpoint \emph{before processing this edge}. If either endpoint is unassigned, put the edge in $B$. Otherwise, send it to $F_{a(u),a(v)}$.

    \item For an edge $\{u,v\}$ sent to a forest $F_{a(u),a(v)}$, proceed as follows:
    \begin{enumerate}
        \item If $u$ and $v$ belong to different connected components, retain the edge in the forest.
        
        \item Otherwise, let $P$ be the unique path connecting $u$ and $v$ in the forest.
        \begin{enumerate}
            \item If $a(u)=a(v)$ and $P$ has even length, then we know there is an odd cycle in the neighborhood of the anchor. Halt the algorithm and return a NO-instance.
            \item Otherwise, we already have a odd-length replacement path, so we discard the edge.
        \end{enumerate}
    \end{enumerate}

    \item Assign each still-unassigned endpoint to the other endpoint if that other endpoint belongs to $S$.
\end{enumerate}

If no early NO-instance occurs, define the retained subgraph
\[
 H=\left(V, B\cup\bigcup_{a,b\in S}F_{ab}\right)\subseteq G(V,E).
\]
Find a proper 3-coloring $D:V\to\{1,2,3\}$ of $H$ by exhaustive search, and return a NO-instance if none exists. 

\begin{lemma}[Odd replacement paths]\label{lem:paths}
If the construction does not halt and return a NO-instance, every discarded edge of $G$ has assigned endpoints $u\in A_a$, $v\in A_b$, and an odd-length path from $u$ to $v$ in $F_{ab}$. All vertices of that path belong to $A_a\cup A_b$.
\end{lemma}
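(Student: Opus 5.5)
We prove the lemma by tracing exactly when, in the construction, an edge can be discarded, and then using two invariants of the forests.

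First, the only place an edge leaves the retained graph is step 2(b)(ii). Such an edge $\{u,v\}$ must have reached step 2, so by step 1 both endpoints were already assigned before the edge was processed. Since assignments are permanent, $u\in A_a$ and $v\in A_b$ for $a=a(u)$ and $b=a(v)$, and the edge was sent to $F_{ab}$. Because step 2(a) did not apply, $u$ and $v$ lie in the same component of $F_{ab}$, so the unique path $P$ exists.

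Second, we record an invariant: every edge of $F_{ab}$ has one endpoint in $A_a$ and the other in $A_b$, or both in $A_a$ when $a=b$. This holds because an edge $\{x,y\}$ is inserted into $F_{a(x),a(y)}$ only when both endpoints are assigned, and assignments never change later. Hence every vertex of $F_{ab}$ that is incident to an edge lies in $A_a\cup A_b$. The path $P$ consists of edges of $F_{ab}$, so all its vertices lie in $A_a\cup A_b$. This proves the last sentence of the lemma.

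Third, we show that $P$ is odd, splitting on whether $a=b$. If $a=b$, the algorithm did not halt, so step 2(b)(i) did not fire, and $P$ has odd length. If $a\ne b$, we use a second invariant: each edge of $F_{ab}$ joins a vertex of $A_a$ to a vertex of $A_b$. Since $A_a\cap A_b=\emptyset$, the forest $F_{ab}$ is bipartite with sides $A_a$ and $A_b$, so the vertices along any path in it alternate between the two sides. The path $P$ starts at $u\in A_a$ and ends at $v\in A_b$, so its length is odd.

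Finally, the path persists. Forests only gain edges and never lose them, so $P$ remains in the final $F_{ab}$, which is contained in $H$.

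The only step needing care is the $a\ne b$ case. There we must check that no edge inside a single $A_a$ can ever enter $F_{ab}$. This follows directly from the routing rule in step 1, which sends each edge according to the anchors of its endpoints, combined with the permanence of assignments.
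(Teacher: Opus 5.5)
Your proof is correct and follows the paper's argument essentially step for step. Both endpoints were assigned before the edge arrived, so it was routed to $F_{ab}$ and discarded only once a forest path existed; oddness then comes from $F_{ab}$ being bipartite between $A_a$ and $A_b$ when $a\ne b$, and from the non-halting condition when $a=b$. Your extra remark that forests only gain edges, so the path survives into the final $H$, makes explicit a point the paper leaves implicit but relies on in Lemma~\ref{lem:6color}.
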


\begin{proof}
Let $\{u,v\}$ be a discarded edge. Since it was not buffered, both endpoints were already assigned when the edge arrived. Write $a(u)=a$ and $a(v)=b$. Hence the edge was sent to $F_{ab}$.

By step 2 of the algorithm, an edge sent to $F_{ab}$ is discarded only when its endpoints are already connected in $F_{ab}$. Thus, when $\{u,v\}$ is discarded, there is a unique path from $u$ to $v$ in the forest $F_{ab}$.

If $a\ne b$, every edge of $F_{ab}$ has one endpoint in $A_a$ and the other in $A_b$. Hence $F_{ab}$ is bipartite with sides $A_a$ and $A_b$. Since $u\in A_a$ and $v\in A_b$, their path in $F_{ab}$ has odd-length.

If $a=b$, step 2 would halt and return a NO-instance if the unique path between $u$ and $v$ in $F_{aa}$ had even length. Since the algorithm does not halt, this path must therefore have odd-length.

Finally, by construction, every vertex appearing in $F_{ab}$ belongs to $A_a\cup A_b$. Hence every vertex on the replacement path belongs to $A_a\cup A_b$.
\end{proof}

\begin{lemma}\label{lem:6color}
If the algorithm does not halt and return a NO-instance, every proper 3-coloring $D$ of $H$ yields a proper 6-coloring of $G$.
\end{lemma}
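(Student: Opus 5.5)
The plan is to build the $6$-coloring as an ordered pair of colors from $D$: the vertex's own color, and the color of its anchor. The structural fact that makes this work is that every assigned vertex is adjacent in $H$ to its anchor.

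\textbf{Step 1: each vertex sees its anchor in $H$.} Let $v$ be a vertex with $a(v)=a$ defined. By step 3 of the construction, $v$ became assigned while processing an edge $\{v,a\}$ with $a\in S$. Step 1 checks assignments before step 3 updates them, so $v$ was still unassigned when this edge was checked. Hence $\{v,a\}$ was placed in the buffer $B$, and so $\{v,a\}\in E(H)$. Because $D$ is a proper $3$-coloring of $H$, this gives $D(v)\neq D(a)$. Consequently, every vertex of $A_a$ receives one of the two colors in $\{1,2,3\}\setminus\{D(a)\}$.

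\textbf{Step 2: the coloring.} Let $\Pi=\{(x,y)\in\{1,2,3\}^2 : x\neq y\}$, which has exactly $6$ elements. Define $\psi:V\to\Pi$ as follows:
\begin{itemize}
\item if $a(v)$ is defined, set $\psi(v)=(D(v),D(a(v)))$, which lies in $\Pi$ by Step 1;
\item if $v\in U$, set $\psi(v)=(D(v),y_v)$ for an arbitrary fixed $y_v\neq D(v)$.
\end{itemize}

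\textbf{Step 3: edges of $H$.} For every edge $uv\in E(H)$ we have $D(u)\neq D(v)$. So the first coordinates of $\psi(u)$ and $\psi(v)$ already differ.

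\textbf{Step 4: discarded edges.} By Lemma~\ref{lem:paths}, a discarded edge $uv$ has $u\in A_a$, $v\in A_b$, and an odd-length path $P$ from $u$ to $v$ in $F_{ab}\subseteq H$, with every vertex of $P$ in $A_a\cup A_b$.
\begin{itemize}
\item If $D(a)\neq D(b)$, the second coordinates of $\psi(u)$ and $\psi(v)$ differ, and we are done.
\item If $D(a)=D(b)=c$, then by Step 1 every vertex of $P$ is colored by $D$ from the two-element set $\{1,2,3\}\setminus\{c\}$. Since $P\subseteq H$ and $D$ is proper, the colors along $P$ must alternate between these two values. The path has odd length, so its endpoints get different colors: $D(u)\neq D(v)$. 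Thus the first coordinates differ.
\end{itemize}

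Every edge of $G$ is either in $H$ or discarded, so $\psi$ is a proper $6$-coloring of $G$.

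The main obstacle is Step 1, namely confirming that the anchor edge $\{v,a\}$ really lies in $H$. This relies on the order of operations in the construction (the assignment check in step 1 happens before the update in step 3). Everything else is the routine observation that a proper $2$-coloring along an odd path separates its endpoints.
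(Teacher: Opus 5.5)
Your proof is correct and follows essentially the same approach as the paper. Both color each vertex by the pair (own $D$-color, anchor's $D$-color), with the coordinates in the opposite order to the paper's. Both separate $H$-edges by $D$ itself, and separate discarded edges either by the anchor colors or by the alternating odd replacement path. Your Step 1 also spells out, via the order of the construction's steps, why the assignment edge $\{v,a(v)\}$ lies in $B\subseteq E(H)$; the paper only asserts this.
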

\begin{proof}
Fix any function $f:\{1,2,3\}\to\{1,2,3\}$ with $f(j)\ne j$, for example $f(j)=(j\bmod 3)+1$. Define
\[
 C(v)=
 \begin{cases}
  (D(a(v)),D(v)),&v\notin U,\\
  (f(D(v)),D(v)),&v\in U.
 \end{cases}
\]
Since every assignment edge $\{v,a(v)\}$ belongs to $H$ and $D$ properly colors $H$, we have
\[
D(a(v))\ne D(v)
\]
for every assigned vertex $v$. For $v\in U$, we also have $f(D(v))\ne D(v)$ by the definition of $f$. Hence every value of $C$ belongs to the six-element palette $\{(i,j):i,j\in\{1,2,3\},\ i\ne j\}$.

Observe that every edge of $H$ has endpoints with different second coordinates under $C$, since the second coordinate of $C(v)$ is $D(v)$ and $D$ properly colors $H$. In particular, this covers every edge with at least one endpoint in $U$, since all such edges are buffered and therefore belong to $H$.

Now consider a discarded edge $\{u,v\}$, with anchors $a,b$, respectively. If $D(a)\ne D(b)$, its first coordinates differ. Otherwise write $D(a)=D(b)=i$ and use the odd path from Lemma~\ref{lem:paths}. In particular, every path vertex $w$ is assigned to $a$ or $b$ and has its assignment edge in $H$. Hence, $D(w)\ne i$. The path is properly colored by $D$ using only the other two colors. Since the assignment edges $ua$ and $vb$ are in $H$, the path's odd-length forces $D(u)\ne D(v)$, so the second coordinates differ. This proves that $C$ is proper on every edge of $G$.
\end{proof}

\begin{lemma}\label{lem:complete}
If $G$ is 3-colorable, the algorithm never returns a
NO-instance and $H$ has a proper 3-coloring.
\end{lemma}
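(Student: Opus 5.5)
The plan is to fix a proper 3-coloring $\psi$ of $G$, which exists by hypothesis, and use it against both ways the algorithm can output NO: the early halt in step 2(b)(i) and the final exhaustive search on $H$.

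\emph{Final search.} By construction every retained edge is an edge of $G$: buffered edges are edges of $G$, and forest edges are edges of $G$. Hence $H=\left(V, B\cup\bigcup_{a,b\in S}F_{ab}\right)$ is a subgraph of $G$. The restriction of $\psi$ is therefore a proper 3-coloring of $H$, and the exhaustive search succeeds. So it suffices to rule out the early halt.

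\emph{Early halt.} Suppose the algorithm halts while processing an edge $\{u,v\}$ with $a(u)=a(v)=a$, where the unique path $P$ from $u$ to $v$ in $F_{aa}$ has even length. I will show this forces an odd closed structure that $\psi$ cannot color.

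\begin{enumerate}[label=(\roman*)]
\item Every vertex of $F_{aa}$ lies in $A_a$. By step 3, each $w\in A_a$ was assigned to $a$ because an edge $\{w,a\}$ arrived earlier. That edge is an edge of $G$ and was buffered, since $w$ was unassigned when it arrived. So $wa\in E(G)$ for every $w\in A_a$.
\item Under $\psi$, every vertex of $A_a$ is adjacent to $a$, so $\psi(w)\ne\psi(a)$ for all $w\in A_a$. Thus $\psi$ restricted to $A_a$ uses only the two colors other than $\psi(a)$.
\item The path $P$ lies inside $A_a$ and is properly colored by $\psi$. Since only two colors are available on $A_a$, the colors along $P$ must alternate.
\item Because $P$ has even length, alternation gives $\psi(u)=\psi(v)$. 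But $uv\in E(G)$, contradicting that $\psi$ is proper.
\end{enumerate}

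So no halt occurs. Equivalently, $P+uv$ is an odd cycle inside $N_G(a)$, and together with $a$ it yields a subgraph that is not 3-colorable.

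The only subtle point is checking that the path vertices really lie in $A_a$ and are genuinely adjacent to $a$ in $G$. This holds because forest $F_{aa}$ only receives edges whose endpoints were both already assigned to $a$, and assignments come from actual edges to the anchor. This is exactly the membership fact already used at the end of Lemma~\ref{lem:paths}.
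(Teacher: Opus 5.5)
Your proof is correct and follows the paper's argument: $H\subseteq G$ rules out a NO from the final exhaustive search, and $A_a\subseteq N_G(a)$ forces a 3-coloring to use only two colors on $A_a$, so the even forest path together with the arriving edge $uv$ (an odd cycle) cannot be properly colored. You also spell out why each $w\in A_a$ is adjacent to $a$ in $G$ (through its assignment edge $\{w,a\}$), which the paper asserts without justification.
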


\begin{proof}
Suppose the algorithm terminates early with a NO-instance. Then, the arriving edge completes an odd cycle within some $A_a\subseteq N_G(a)$. In a proper 3-coloring, every neighbor of $a$ must avoid the color of $a$, so $G[A_a]$ must be bipartite. Such an odd cycle is therefore impossible.

Finally, since $H\subseteq G$, any proper 3-coloring of $G$ is also a proper 3-coloring of $H$. Hence, the exhaustive search cannot return a NO-instance either.
\end{proof}

\subsection{Space Complexity}
Sample $S$ uniformly among all $s=\lceil\sqrt n\rceil$-element subsets of $[n]$, independently of the stream. The uniform size-$s$ sample can be generated sequentially using $O(\log n)$ auxiliary bits \cite{10.1145/3147.3165}, in addition to the $O(s\log n)$ bits used to store its labels.

Let
\[
 d=\min\left\{n,\ \left\lceil\frac ns\right\rceil
           \left\lceil\log\frac n\varepsilon\right\rceil\right\}.
\]
Cap the buffer at $nd$ edges. On overflow, the decision algorithm returns a YES-instance and halts. The coloring variant instead returns failure.

\begin{lemma}[Buffer size]\label{lem:buffer}
For every fixed edge order, $|B|\le nd$ with probability at least $1-\varepsilon$.
\end{lemma}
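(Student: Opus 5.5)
The plan is to charge each buffered edge to one of its endpoints that was unassigned when the edge arrived, and then show that, with high probability, no vertex is charged more than $d$ times. If $d=n$ the bound $|B|\le nd$ is trivial, since $|B|\le\binom n2$, so assume $d=\lceil n/s\rceil\lceil\log(n/\varepsilon)\rceil<n$.

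\textbf{Charging.} Fix the edge order. By step~1, an edge $\{u,v\}$ enters $B$ only if at least one endpoint, say $u$, is unassigned before the edge is processed. Charge the edge to such a $u$. It then suffices to show that, with probability at least $1-\varepsilon$, every vertex $u$ is charged at most $d$ times; summing over the $n$ vertices gives $|B|\le nd$.

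\textbf{A vertex cannot stay unassigned for long.} Fix $u$ and list the edges incident to $u$ in stream order as $e_1,e_2,\ldots$, with other endpoints $w_1,w_2,\ldots$, which are distinct because $G$ is simple. If $u$ is charged for $e_j$, then $u$ was unassigned when $e_j$ arrived. By step~3, this means that for every earlier $e_\ell$ ($\ell<j$) either $w_\ell\notin S$, or $u$ was already assigned, which contradicts $u$ being unassigned at $e_j$. (For $u\in S$, the relevant case is $w_\ell\notin S$ or a consistent variant; it suffices to bound the event that $u$ stays unassigned.) Hence, if $u$ is charged more than $d$ times, then none of $w_1,\ldots,w_d$ lies in $S$. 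The sequence $w_1,\ldots,w_d$ is determined by the fixed order, independently of $S$.

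\textbf{Probability bound.} For a fixed set $W$ of $d$ distinct vertices, and uniform $S$ of size $s$,
\[
\Prb[S\cap W=\emptyset]=\binom{n-d}{s}\Big/\binom ns\le\left(1-\frac dn\right)^s\le e^{-ds/n}\le e^{-\lceil\log(n/\varepsilon)\rceil}\le \frac{\varepsilon}{n}.
\]
This uses $ds/n\ge\lceil\log(n/\varepsilon)\rceil$, and $\log$ being base two gives $e^{-x}\le 2^{-x}$. A union bound over the $n$ vertices then gives failure probability at most $\varepsilon$.

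\textbf{Main obstacle.} The delicate point is the middle step. I have to verify that the deterministic structure of the fixed stream makes the "bad event" for $u$ depend only on whether $S$ hits a fixed set of $d$ neighbours. I also have to handle the degenerate case where $u$ has fewer than $d$ incident edges, which is trivial because $u$ is then charged fewer than $d$ times. Finally, I need to check that being charged for $e_j$ really requires $u$ to be unassigned through all earlier $e_\ell$. This holds because assignments are permanent and happen only in step~3.
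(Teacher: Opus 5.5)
Your proposal is correct and is essentially the paper's proof: your per-vertex charge count is bounded by the paper's $R_v$ (incident edges arriving while $v$ is still unassigned), and both arguments use the same bound $\binom{n-d}{s}/\binom ns\le(1-d/n)^s\le e^{-sd/n}\le\varepsilon/n$, followed by a union bound over the $n$ vertices. Your parenthetical about $u\in S$ is unnecessary, since step~3 assigns any still-unassigned endpoint, anchor or not, whose other endpoint lies in $S$, so the argument needs no special case.
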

\begin{proof}

The proof is trivial when $d=n$, thus we assume $d<n$.
For each vertex $v$, list its distinct neighbors in stream order. Let $R_v$ be the number of incident edges arriving while $v$ is still unassigned \emph{before} that edge. This includes its first assignment edge if one exists. If $R_v>d$, then $v$ has more than $d$ neighbors and none of its first $d$ neighbors is an anchor. Uniform sampling without replacement gives
\[
 \Prb[R_v>d]
 \le\frac{\binom{n-d}{s}}{\binom ns}
 \le(1-d/n)^s
 \le e^{-sd/n}
 \le\frac\varepsilon n.
\]

By a union bound, with probability at least $1-\varepsilon$, we have $R_v\le d$ for every vertex $v$. In this case, every buffered edge is counted by at least one $R_v$, and therefore $|B|\le \sum_v R_v\le nd$.
\end{proof}

We now bound the size of the forests.
Each cross-group forest contains at most $|A_a|+|A_b|$ edges and each internal forest at most $|A_a|$. Consequently
\[
\begin{aligned}
\sum_{a\le b}|F_{ab}|
&\le
\sum_{a<b}(|A_a|+|A_b|)
+\sum_a|A_a|\\
&=(s-1)\sum_a|A_a|+\sum_a|A_a|\\
&=s\sum_a|A_a|\\
&\le sn.
\end{aligned}
\]
Since running time is unrestricted, we can check connectivity and path parity by searching the stored forests, reusing the same $O(n)$ temporary entries for every search. Forest headers require $O(s^2)=O(n)$ additional entries. Each entry uses $O(\log n)$ bits.

Thus, the algorithm uses:
\[
 O\bigl((ns+nd+n)\log n\bigr)
 =O\bigl(n^{3/2}\log(n/\varepsilon)\log n\bigr) \text{ bits of space.}
\]

Theorem \ref{thm:core} now easily follows.

\section{ $\chi(G)\le3$ vs $\chi(G)>Q$ }\label{sec:generalupperbound}

We use the following immediate consequence\footnote{The only difference between our stated lemma and the original lemma is that, in the original, the probability must be exactly $2\ln(n)/L$, whereas in our version it only needs to be at least $2\ln(n)/L$.} of~\cite[Lemma~8.2 (arXiv version)]{assadi2026coloring}.

\begin{lemma}\label{lem:vertex}
 Let $G$ be an $n$-vertex graph and $H=G[S]$ be an induced subgraph of $G$ obtained by sampling each vertex of $G$ independently with probability  $p\in(0,1)$ such that $L$ is a positive integer and $pL\ge 2\ln n$ then
\[
    \Prb\left[
        \chi(G[S])<\frac{\chi(G)}{L}-1
    \right]
    \le \frac12.
\]
\end{lemma}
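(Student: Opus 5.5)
The statement is a relaxation of \cite[Lemma~8.2 (arXiv version)]{assadi2026coloring}: that lemma assumes sampling probability exactly $p_0=2\ln(n)/L$, while here only $p\ge p_0$ is assumed. The plan is a monotone coupling. We realize the sample at probability $p$ as a superset of a sample at probability $p_0$. Since the chromatic number of induced subgraphs is monotone under taking supersets of the vertex set, the failure event can only shrink.

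\textbf{Step 1: reduce to the exact-probability case.} Set $p_0=2\ln(n)/L$. The hypothesis $pL\ge 2\ln n$ gives $p_0\le p<1$. If $p_0\le 0$, which happens when $n=1$ and $\ln n=0$, the statement is trivial: then $\chi(G)\le 1$, so $\chi(G)/L-1\le 0\le\chi(G[S])$ and the event is empty. So assume $0<p_0\le p<1$.

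\textbf{Step 2: build the coupling.} Let $S_0$ be obtained by including each vertex independently with probability $p_0$. Independently, let $T$ include each vertex independently with probability $r=(p-p_0)/(1-p_0)\in[0,1)$. Set $S'=S_0\cup T$. Each vertex lies in $S'$ independently, with probability
\[
1-(1-p_0)(1-r)=1-(1-p)=p,
\]
so $S'$ has exactly the distribution of $S$.

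\textbf{Step 3: use monotonicity.} Because $S_0\subseteq S'$, the graph $G[S_0]$ is an induced subgraph of $G[S']$, hence $\chi(G[S_0])\le\chi(G[S'])$. Therefore the event $\chi(G[S'])<\chi(G)/L-1$ implies $\chi(G[S_0])<\chi(G)/L-1$. Taking probabilities and applying the cited lemma to $S_0$ gives
\[
\Prb\bigl[\chi(G[S])<\chi(G)/L-1\bigr]\le\Prb\bigl[\chi(G[S_0])<\chi(G)/L-1\bigr]\le\tfrac12.
\]

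\textbf{Main obstacle.} The only step needing care is confirming that the original lemma's hypotheses are met at $p_0$, namely that $p_0\in(0,1)$ and that $L$ is a positive integer. Both follow from $p_0\le p<1$ and the given assumptions. No streaming considerations enter; the argument is purely probabilistic.
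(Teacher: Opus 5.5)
Your proof is correct and follows the paper's argument: both reduce to the exact-probability lemma at $p_0=2\ln(n)/L$ via a monotone coupling $S_0\subseteq S$, combined with monotonicity of $\chi$ under induced subgraphs. The differences are minor. The paper builds the coupling by thinning $S$, keeping each vertex with probability $p_0/p$, whereas you take a union with an independent sample at rate $(p-p_0)/(1-p_0)$. You also treat $n=1$ separately, which the paper leaves implicit.
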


The proof of Lemma \ref{lem:vertex} can be found in Appendix \ref{app:vertex-sampling}.
 
\begin{theorem}\label{thm:gap}
Let $n\ge2$, let $c\ge1$ be a fixed integer, and let
$\ell=\lceil\log_2 n\rceil$. For every integer $Q$ with
$144\ell\le Q\le n$, there is a one pass adversarial-order
insertion-only algorithm distinguishing $\chi(G)\le3$ from
$\chi(G)>Q$ with success probability at least $1-n^{-c}$, using
\[
  O\!\left((n/Q)^{3/2}\log^5 n\right)
\]
bits in the worst case.
\end{theorem}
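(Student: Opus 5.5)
We combine vertex sampling (Lemma~\ref{lem:vertex}) with the decision version of the algorithm of Theorem~\ref{thm:core}, run in parallel on many independent samples. Set $L=\lfloor Q/12\rfloor$ and $p=\min\{1,\,2\ln n/L\}$. For $Q\ge 144\ell$ we have $L\ge 12\ell$, so $p<1$ and $pL\ge 2\ln n$. We run $T=\Theta(c\log n)$ independent copies. In copy $t$ we sample $S_t\subseteq[n]$, keeping each vertex independently with probability $p$. We then run the decision algorithm of Theorem~\ref{thm:core} on the substream of edges with both endpoints in $S_t$, with error parameter $\varepsilon'=n^{-c-1}$. We output YES if and only if every copy outputs YES.

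\emph{Completeness.} If $\chi(G)\le 3$, then every induced subgraph $G[S_t]$ is $3$-colorable. The algorithm of Theorem~\ref{thm:core} never rejects a $3$-colorable graph: by Lemma~\ref{lem:complete} it never halts with NO, exhaustive search finds a coloring, and buffer overflow returns YES in the decision variant. So completeness is perfect.

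\emph{Soundness.} If $\chi(G)>Q$, Lemma~\ref{lem:vertex} gives, for each $t$ independently with probability at least $1/2$,
\[
\chi(G[S_t])\ge \chi(G)/L-1> Q/L-1\ge 12-1>6.
\]
On such a copy, Theorem~\ref{thm:core} (applied to $G[S_t]$ with vertex count $|S_t|$) outputs NO with probability at least $1-\varepsilon'$. The case where no copy has a good sample has probability at most $2^{-T}\le n^{-c}/2$ once $T=\lceil (c+1)\log_2 n\rceil$. Failure of the copy on a good sample adds at most $n^{-c-1}$. So the error is at most $n^{-c}$ for $n\ge2$; if needed we enlarge the constants slightly.

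\emph{Space.} This is the main obstacle, since the bound must hold in the worst case, while $|S_t|$ is random and Theorem~\ref{thm:core} is stated for a fixed vertex set. The expected size is $pn=\Theta(n\log n/Q)$. By a Chernoff bound, $|S_t|\le 2pn+O(c\log n)=O(n\log n/Q)$ with probability $1-n^{-c-2}$. To make the bound worst case, we cap each copy: if $|S_t|$ exceeds $N:=\lceil 4pn\rceil+O(c\log n)$, that copy outputs YES. This preserves completeness and adds only $n^{-c-1}$ to the soundness error. Vertices of $S_t$ are relabeled into $[|S_t|]$ using a stored sorted list of $O(N\log n)$ bits; the sample is drawn at the start, independent of the stream. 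Each copy then uses
\[
O\bigl(N^{3/2}\log(N/\varepsilon')\log N\bigr)=O\bigl((n\log n/Q)^{3/2}\log^2 n\bigr)=O\bigl((n/Q)^{3/2}\log^{7/2} n\bigr)
\]
bits, using $\log(N/\varepsilon')=O(c\log n)$. Multiplying by $T=O(\log n)$ copies gives $O((n/Q)^{3/2}\log^{9/2}n)\subseteq O((n/Q)^{3/2}\log^5 n)$. The storage for the samples is $O(TN\log n)$, which is dominated by this.

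Edge cases such as a tiny $N$ or $S_t$ with fewer than $2$ vertices are handled by outputting YES, which is safe: a sample with $\chi(G[S_t])>6$ has at least $7$ vertices.
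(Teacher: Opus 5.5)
Your proposal is correct and follows essentially the same route as the paper: $\Theta(\log n)$ independent Bernoulli vertex samples with $pL\ge 2\ln n$, each capped at $O((n/Q)\log n)$ vertices (overflow $\Rightarrow$ YES), the decision version of Theorem~\ref{thm:core} run with inverse-polynomial error, and a union bound over the three failure modes (no witness sample, overflow, failure on a witness), giving the same $O((n/Q)^{3/2}\log^{9/2} n)$ bound. The only substantive difference is that your $p=2\ln n/L$ is generally irrational, so exact Bernoulli sampling with a bounded number of random bits is not possible. The paper avoids this by taking $p=2^{-b}$, which is why Lemma~\ref{lem:vertex} is stated with $pL\ge 2\ln n$, and you can make the same fix by rounding your $p$ up to a power of $2$, which at most doubles the sample size.
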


\begin{proof}
\noindent
Set $\delta=n^{-c}$ and use the following parameters:
\begin{align*}
  b&=\left\lfloor\log_2\frac{Q}{72\ell}\right\rfloor,
  &p&=2^{-b},
  &L&=4\ell\,2^b,\\
  R&=\left\lceil\log_2\frac4\delta\right\rceil,
  &\mu&=np,
  &K&=\lceil2\mu\rceil+
       \left\lceil\log_2\frac{4R}{\delta}\right\rceil.
\end{align*}
The assumption $Q\ge144\ell$ gives $b\ge1$ and
\[
  \frac{72\ell}{Q}\le p<\frac{144\ell}{Q},
  \qquad
  pL=4\ell\ge2\ln n,
  \qquad
  Q/L\ge18.
\]

\medskip
Prepare $R$ independent Bernoulli vertex samples.
For each vertex $v$ and each trial $i$, generate $b$
independent random bits, each equally likely to be $0$ or $1$.
Include $v$ in $S_i$ if and only if all $b$ bits are $0$,
which occurs with probability $2^{-b}=p$.
The bits used for different vertices and trials are also
independent.

Let each trial store at most $K$ selected vertices. As soon as a trial selects
a $(K+1)$-st vertex, discard its stored sample and let that
trial return a YES-instance. Let $S_i$ denote the full
Bernoulli sample defined by the random selections, even
if the actual trial is discarded before its sample is
completed.

A trial with $|S_i|\le1$ returns a YES-instance.
For every remaining trial, with $2\le|S_i|\le K$, run
the algorithm of Theorem~\ref{thm:core} on $G[S_i]$ with
error parameter $\varepsilon=\frac{\delta}{4R}$.

Each such instance chooses its own anchors within $S_i$, independent of the other instances.
When an original edge arrives, send it to each active
instance whose sample contains both endpoints.
Return a NO-instance in the original input if any trial returns a NO-instance, and
otherwise return a YES-instance.

\noindent\textbf{Correctness.}
If $\chi(G)\le3$, every sampled induced subgraph is
3-colorable.
Discarded trials and trials with at most one sampled
vertex return a YES-instance, while every active trial
returns a YES-instance by the perfect completeness of
Theorem~\ref{thm:core}.
Thus, the overall algorithm has perfect completeness.

Suppose now that $\chi(G)>Q$.
Call $S_i$ a witness if
\[
  \chi(G[S_i])\ge\frac{\chi(G)}{L}-1.
\]
Every witness satisfies
\[
  \chi(G[S_i])
  >\frac{Q}{L}-1
  \ge17
  >6.
\]
Let $E_1$ be the event that no sample is a witness,
$E_2$ the event that some sample has more than $K$
vertices\footnote{We need to account for $E_2$ as it could lead to a witness being discarded.}, and $E_3$ the event that some surviving
witness trial returns a YES-instance.

By Lemma~\ref{lem:vertex} and independence of the
samples,
\[
  \Prb[E_1]\le2^{-R}\le\frac{\delta}{4}.
\]
For each trial, $Z=|S_i|$ is binomial with mean
$\mu$, so $\mathbb{E}[2^Z]=(1+p)^n\le e^\mu$.
Consequently and by Markov's inequality,
\[
  \Prb[Z>K]
  \le e^\mu 2^{-K}
  \le e^{(1-2\ln2)\mu}\frac{\delta}{4R}
  \le\frac{\delta}{4R}.
\]
A union bound gives $\Prb[E_2]\le\delta/4$.

Since $\chi(G[T])>6$, Theorem~\ref{thm:core} bounds
its false-acceptance probability by
$\varepsilon=\delta/(4R)$.
Averaging over $T$ and taking a union bound over
trials gives $\Prb[E_3]\le\delta/4$.

Outside $E_1\cup E_2\cup E_3$, a witness survives
and its trial returns a NO-instance.
Hence
\[
  \Prb[\text{false acceptance}]
  \le\Prb[E_1]+\Prb[E_2]+\Prb[E_3]
  \le\frac{3\delta}{4}
  <\delta.
\]

\medskip
\noindent\textbf{Space complexity.} Since $p<144\ell/Q$ and $c$ is fixed,
\[
  \mu=np=O((n/Q)\log n),
  \qquad
  R=O(\log n),
  \qquad
  K=O(n/Q\log n).
\]
Each trial stores at most $K$ sampled vertices.
An active instance of the algorithm of
Theorem~\ref{thm:core}, using original identifiers of
$O(\log n)$ bits and error parameter
$\varepsilon=\delta/(4R)$, uses
\[
  O((n/Q)^{3/2}\log^{7/2} n)
\]
bits.
Multiplying by the $R=O(\log n)$ concurrent trials gives
\[
  O\!\left(RK^{3/2}\log^2 n\right)
  =
  O\!\left((n/Q)^{3/2}\log^{9/2}n\right)
  \subseteq
  O\!\left((n/Q)^{3/2}\log^5 n\right).
\]

The membership lists use
\[
  O(RK\log n)=O((n/Q)\log^3 n)
\]
bits, which fit within this bound.

\end{proof}

\begin{corollary}\label{cor:main}
For every fixed integer $c\ge1$ and all sufficiently
large $n$, there is a one pass adversarial-order
semi-streaming decision algorithm distinguishing
\[
  \chi(G)\le3
  \qquad\text{from}\qquad
  \chi(G)>\lceil n^{1/3}\rceil,
\]
with success probability at least $1-n^{-c}$,
using $O(n\log^5 n)$ bits.
\end{corollary}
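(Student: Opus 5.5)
The corollary should follow by instantiating Theorem~\ref{thm:gap} with $Q=\lceil n^{1/3}\rceil$ and checking that its hypotheses hold and that the space bound collapses to $O(n\log^5 n)$. The argument reduces to three checks, done in this order.

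\emph{Step 1: the hypotheses on $Q$.} Let $\ell=\lceil\log_2 n\rceil$. We need $144\ell\le\lceil n^{1/3}\rceil\le n$. The upper inequality holds for every $n\ge1$, since $n^{1/3}\le n$ and $n$ is an integer. The lower inequality is where ``sufficiently large $n$'' enters. Because $n^{1/3}/\log_2 n\to\infty$, there is an absolute threshold $n_0$ such that $n^{1/3}\ge 144(\log_2 n+1)\ge144\ell$ for all $n\ge n_0$. We do not need $n_0$ explicitly; it is large but finite, and this existence argument is enough.

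\emph{Step 2: the space bound.} For $n\ge n_0$, Theorem~\ref{thm:gap} applies with the given fixed $c$. It yields a one-pass adversarial-order insertion-only algorithm distinguishing $\chi(G)\le3$ from $\chi(G)>\lceil n^{1/3}\rceil$ with success probability at least $1-n^{-c}$ and perfect completeness. Its space is $O((n/Q)^{3/2}\log^5 n)$. Since $Q\ge n^{1/3}$, we get $n/Q\le n^{2/3}$, hence $(n/Q)^{3/2}\le n$. The space is therefore $O(n\log^5 n)$ bits, which is $O(n\,\mathrm{polylog}\,n)$, i.e.\ semi-streaming.

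\emph{Step 3: the decision version.} Theorem~\ref{thm:gap} is already stated as a decision algorithm, so nothing further is needed. The hidden constant depends on $c$, which is allowed since $c$ is fixed.

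None of the steps is a genuine obstacle. The only point needing care is Step 1: the lower bound $Q\ge144\ell$ genuinely fails for small $n$, which is exactly why the corollary restricts to sufficiently large $n$.
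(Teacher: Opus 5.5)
Your proposal is correct and takes the same approach as the paper, which obtains the corollary simply by setting $Q=\lceil n^{1/3}\rceil$ in Theorem~\ref{thm:gap}. There, $144\lceil\log_2 n\rceil\le Q\le n$ holds for all sufficiently large $n$, and $(n/Q)^{3/2}\le n$ reduces the space bound to $O(n\log^5 n)$; you have written out the hypothesis check and exponent arithmetic that the paper leaves implicit.
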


\section*{Acknowledgments}

I would like to thank Santhoshini Velusamy for introducing me to streaming algorithms and for her help and support throughout the process. 

Deep Review (by Google) was used on a draft of this manuscript. Multiple versions of ChatGPT were used for literature search and writing assistance. Theorem \ref{thm:gap} and Appendix \ref{app:vertex-sampling} were entirely obtained using GPT Astra; the author subsequently improved the presentation and takes no credit, but full responsibility, for these results.

\appendix

\section{Proof of the Vertex-Sampling Lemma}
\label{app:vertex-sampling}

For the sake of completeness, we give the proof of
Lemma~\ref{lem:vertex}.

\begin{proof}
Set
\[
    p_0=\frac{2\ln n}{L}.
\]
Since $pL\ge 2\ln n$, we have $p_0\le p$. Moreover,
\[
    \frac{2\ln n}{p_0}=L\in\mathbb{N}.
\]

Let $T$ be obtained by sampling each vertex independently with
probability $p_0$. By~\cite[Lemma~8.2]{assadi2026coloring},
\[
    \Prb\left[
        \chi(G[T])<\frac{\chi(G)}{L}-1
    \right]
    \le \frac12.
\]

Couple the samples so that $T\subseteq S$: first sample $S$ with
probability $p$, and then retain each vertex of $S$ independently
with probability $p_0/p$ to obtain $T$. Since $T\subseteq S$,
\[
    \chi(G[T])\le \chi(G[S]).
\]
Therefore,
\[
    \left\{
        \chi(G[S])<\frac{\chi(G)}{L}-1
    \right\}
    \subseteq
    \left\{
        \chi(G[T])<\frac{\chi(G)}{L}-1
    \right\},
\]
and hence
\[
    \Prb\left[
        \chi(G[S])<\frac{\chi(G)}{L}-1
    \right]
    \le \frac12.
\]
\end{proof}

\bibliographystyle{siam}
\bibliography{references}

\end{document}